\theoremstyle{remark}
\newtheorem{remark}{Remark}
\theoremstyle{definition}
\newtheorem{definition}{Definition}
\theoremstyle{theorem}
\newtheorem{theorem}{Theorem}
\theoremstyle{theorem}
\newtheorem{lemma}{Lemma}
\theoremstyle{remark}
\newtheorem{example}{Example}
\title{Quantum Modal Logic}
\newif\ifuniqueAffiliation
\author{ \href{https://orcid.org/0000-0001-8878-9824}{\hspace{1mm}Kenji ~Tokuo}\\
Department of Information Engineering, Oita College\\
National Institute of Technology\\
Oita 870-0152\\
Japan\\
\texttt{tokuo@oita-ct.ac.jp} \\
}
\author[1]{%
\href{https://orcid.org/0000-0000-0000-0000}{\usebox{\orcid}\hspace{1mm}David S.~Hippocampus\thanks{\texttt{hippo@cs.cranberry-lemon.edu}}}%
}
\author[1,2]{%
\href{https://orcid.org/0000-0000-0000-0000}{\usebox{\orcid}\hspace{1mm}Elias D.~Striatum\thanks{\texttt{stariate@ee.mount-sheikh.edu}}}%
}
\affil[1]{Department of Computer Science, Cranberry-Lemon University, Pittsburgh, PA 15213}
\affil[2]{Department of Electrical Engineering, Mount-Sheikh University, Santa Narimana, Levand}
\date{October 14, 2024}
\begin{document}
  \maketitle
  \footnotetext[2]{This is a pre-copyedited, author-produced version of an article accepted for publication in Logic Journal of the IGPL following peer review. The version of record
  Kenji Tokuo, Quantum modal logic, {\it Logic Journal of the IGPL}, 2024
  is available online at: https://doi.org/10.1093/jigpal/jzae123
  }

  \begin{abstract}
    A modal logic based on quantum logic is formalized in its simplest possible form.
    Specifically, a relational semantics and a sequent calculus are provided, and the soundness and the completeness theorems connecting both notions are demonstrated.
    This framework is intended to serve as a basis for formalizing various modal logics over quantum logic, such as quantum alethic logic, quantum temporal logic, quantum epistemic logic, and quantum dynamic logic.
  \end{abstract}

  \markboth{}
  {Quantum Modal Logic}
  \clearpage

  \section{Introduction}
  \label{sec:int}

  \subsection{Aims}
  \label{sub:aim}
  In this study, we formalize a modal logic based on quantum logic from a general and abstract standpoint.
  Specifically, we provide a relational semantics and a sequent calculus, and establish the soundness and the completeness theorems connecting both notions.
  The framework is intended to rigorously address various modal logics over quantum logic, such as quantum temporal logic, quantum epistemic logic, and quantum dynamic logic, thereby advancing research in these areas.
  \subsection{Background}
  \label{sub:bac}
  Traditional quantum logic, originated by Birkhoff and von Neumann \cite{Birk1936}, was formulated as orthomodular logic and orthologic in the 1960s and has been a part of non-classical logic research up to the present day.
  The relationship between quantum logic and modal logic was already discussed in the pioneering period of quantum logic research.
  Dishkant \cite{Dish1978} embedded Mackey's axioms \cite{Mack1963} into the infinite multi-valued logic of \L{}ukasiewicz by considering probabilities as truth values. The resulting system can be viewed as an extension of \L{}ukasiewicz's logic to modal logic.
  The modal symbol $Q$ used by him has the meaning of ``is confirmed''.
  Mittelstaedt \cite{Mitt1979} adopted quantum logic as the object logic and ordinary intuitionistic or classical logic as the meta-logic, then introduced modalities on the meta-logic side.
  In quantum theory, it is reasonable to introduce meta-statements to express that the truth of a proposition is ``certain'' or ``undeniable'' through subsequent time evolution, as the information possessed by the system is not perfect in its initial state.
  In addition to them, he introduced the modalities of ``contingent'' and ``objective'' into the meta-logic.
  Fraassen \cite{Fraa1981} stated that the propositions studied in quantum logic are modal: they give information first and foremost about what can and what must happen, and only indirectly about what actually does happen.
  Relations used to define physical necessity may include the following:
  \begin{itemize}
    \item $R_1(i,j)$: all physical laws of $i$ are true in $j$.
    \item $R_2(i,j)$: all physical laws of $i$ are also physical laws of $j$.
    \item $R_3(i,j)$: all and only the physical laws of $i$ are physical laws of $j$.
  \end{itemize}
  Setting $R_1$ as reflexive, $R_2$ reflexive and transitive, and $R_3$ reflexive, transitive, and symmetric, he considered necessity operators corresponding to these relations.
  Pessoa \cite{Pess2005} investigated an experiment in which a single photon passes through the Mach-Zehnder interferometer. He argued that non-distributive logic is unsatisfactory for interpreting this experiment and proposed a modal logic interpretation where $\Box\alpha$ indicates that ``a measurement at position $\Box\alpha$ would necessarily detect the presence of a photon,'' offering a solution to the problem.

  Diverse quantum modal logics have also been proposed for various purposes.
  Yu \cite{Yu2019} introduced quantum modal logic for quantum concurrent programs.
  This logic is semantically defined and includes temporal modality such as next, until, almost surely until, false, true, eventually, almost surely eventually, and always, while it does not include a negation operation.
  Baltag and Smets \cite{Balt2010} proposed a quantum epistemic logic.
  Traditionally, epistemic logic has been used to model classical information flows in distributed computation and multi-agent systems. They extended it to model quantum information flows, but their logic is not based on Birkhoff and von Neumann's quantum logic.
  Baltag and Smets \cite{Balt2012} also introduced quantum transition systems (QTS) and dynamic quantum logic to provide interpretation for these systems.
  This approach can be viewed as incorporating the modality of dynamic logic into the relational semantics for quantum logic proposed by Dishkant \cite{Dish1972} and Goldblatt \cite{Gold1974}.
  Recently, Holliday \cite{Holl2022,Holl2024b} and Holliday and Mandelkern \cite{Holl2024a} constructed a general semantics motivated by natural language studies to consider modalities over orthologic, applying it to epistemic logic. The relationship between their results and our system is mentioned in Section \ref{sec:qml}.

  As is well known, we can embed quantum logic into the normal modal logic {\bf B} using the relational semantics.
  Dalla Chiara \cite{Chia1977} considered the physical meaning of the modalities as follows: Each state $i$ is associated with a measure function $\mu_i$, which assigns probability values to propositions $\alpha$ of quantum logic.
  Let $\rho(\alpha)$ denote the embedding of $\alpha$ into $\mathbf{B}$, then the following holds:
  \begin{itemize}
    \item $i \models \Diamond\rho(\alpha)$ iff $\mu_i(\alpha) \ne 0$
    (the possibility of the modal embedding of $\alpha$ into $\mathbf{B}$ is true in the physical state $i$ iff $\alpha$ has a probability different from $0$ in $i$).
    \item $i \models \Box\rho(\alpha)$ iff $\mu_j(\alpha) = 1$ for any $j$.
    ($\alpha$ is a proposition that holds with certainty).
    \item $i \models \Box\Diamond\rho(\alpha)$ iff $\mu_i(\alpha) = 1$
    (the necessity of the possibility of the modal embedding of $\alpha$ into $\mathbf{B}$
    is true in the physical state $i$ iff  $\alpha$ has probability $1$ in $i$).
    \item $i \models \Box\Box\rho(\alpha)$ iff $i \models \Box\rho(\alpha)$
    \item $i \models \Diamond\Box\rho(\alpha)$ iff $i \models \Box\rho(\alpha)$
  \end{itemize}
  Note that this type of embedding represents quantum logic within a classical modal logic framework. Conversely, our quantum modal logic aims to encompass diverse modalities such as time, epistemicity, dynamicity, and beyond, within the realm of quantum logic.
  \subsection{Our Method}
  \label{sub:met}
  First, we combine the relational semantics for quantum logic and the Kripke semantics for modal logic.
  The relational semantics for quantum logic is due to Dishkant \cite{Dish1972}, Goldblatt \cite{Gold1974}, and Dalla Chiara \cite{Chia1977}. The idea of combining two logics is based on Gabbay \cite{Gabb1998}.
  In the process of this combination, the non-orthogonality relation of the relational semantics for quantum logic inevitably imposes a certain constraint on the accessibility relation of the Kripke semantics for modal logic.  Next, we axiomatize this semantics. Our axiomatic system is a minimal extension of Nishimura's sequent calculus for non-modal quantum logic \cite{Nish1980}. Finally, we rigorously prove the soundness and the completeness between the semantics and the axiomatic system.
  \section{Quantum modal logic}
  \label{sec:qml}
  This section introduces the formal definition of our quantum modal logic, denoted as $\mathbf{QML}$.
  \subsection{Language}
  \label{sub:lan}
  Our formal language employs the following symbols:
  \begin{itemize}
    \item Atomic formulas: $p$, $q$, $\dots$.
    \item Logical connectives: $\wedge$ (conjunction), $\neg$ (negation)
    \item Modal operators: $\Box$ (necessity)
  \end{itemize}
  Formulas are constructed in the usual way. We use lowercase Greek letters, such as $\alpha$, $\beta$, $\dots$, as meta-symbols for formulas. Let $F$ denote the set of all formulas. We use uppercase Greek letters, including those with subscripts, such as $\Gamma$, $\Delta$, $\dots$, and  $\Gamma_1$, $\Gamma_2$, $\dots$, as symbols for subsets of $F$.
  \begin{remark}
    As usual, we define $\vee$ (disjunction) as $\alpha \vee \beta \equiv \neg(\neg\alpha \wedge  \neg\beta)$ and $\Diamond$ (possibility) as $\Diamond\alpha \equiv \neg\Box\neg\alpha$.
    \qed
  \end{remark}
  In this paper, the following meta-logical symbols may be employed in proofs to improve readability:
  \begin{itemize}
    \item $\Rightarrow$: implies
    \item $\forall$: for all
    \item $\exists$: there exists
  \end{itemize}
  Remember that these symbols are not used as logical connectives of the object logic, i.e., $\mathbf{QML}$.
  \subsection{Semantics}
  \label{sub:str}
  The semantic definition of $\mathbf{QML}$ is as follows.
  \begin{definition}[Quantum modal structure]
    A {\it quantum modal structure} is defined as the following quadruple $\mathcal{S} = \langle W, R_Q, R_M, \rho \rangle$.
    \begin{itemize}
      \item $W$: a non-empty set.
      \item $R_Q$: a reflexive and symmetric relation on $W$.
      \item $R_M$: a binary relation on $W$, forced by $R_Q$.
      \item $\rho$:  an assignment of an $R_Q$-closed subset of $W$ to each atomic formula.
    \end{itemize}
    Let us explain each element of the quadruple.
    \begin{itemize}
      \item An element of $W$ is called a {\it world} or a {\it state}.
      \item $R_Q$ is called a {\it non-orthogonality relation} on $W$. This name derives from the non-or-thogonality relation based on the inner product of a Hilbert space.
      \item $R_M$ is called an {\it accessibility relation} with respect to the modal operators $\Box$ and $\Diamond$. This relation is forced by $R_Q$, meaning the following condition must be satisfied:
      \[R_M(i,l) \Rightarrow \forall j \in W (R_Q(i,j) \Rightarrow R_M(j,l)).\]
      Some background context for this condition is provided here. A reflexive and symmetric relation such as $R_Q$ is generally referred to as a similarity relation. Our motivation for using relational semantics to consider logic of physics stems from the desire to examine sets of similar physical situations and to single out the invariants associated with them \cite{Chia2002}. In light of this, our forcing condition can be read as follows: ``the set of situations that are seen through $R_M$ is invariant among similar situations.'' This  constraint is acceptable depending on the system under consideration.
      \item In the definition of $\rho$, a subset $X$ of $W$ is said to be {\it $R_Q$-closed} if it satisfies the following condition:
      \[i \in X \textrm{ iff } \forall j \in W(R_Q(i,j)  \Rightarrow \exists k \in W \textrm{ s.t. }(R_Q(j,k) \textrm{ and } k \in X)).\]
      The reason for considering the value of  $\rho$ as an $R_Q$-closed subset rather than an arbitrary subset of $W$ stems from the original idea in quantum logic that the set of all worlds making an experimental proposition true is not merely a set but a closed subspace of a Hilbert space.
    \end{itemize}
  \end{definition}
  \begin{remark}
    In Holliday \cite{Holl2022,Holl2024b} and Holliday and Mandelkern \cite{Holl2024a}, the forcing condition proposed there is more general than ours. Holliday's condition, expressed using our symbols for comparison, is as follows:
    if $R_M(i,l)$ and $R_Q(l,m)$, then $\exists i' \in W$ s.t. $R_Q(i,i')$ and $\forall j \in W(R_Q (i',j) \Rightarrow \exists l' \in W$ s.t. $R_M(j,l')$  and  $R_Q (l',m))$.
    That is, our forcing condition is a special case where  $i=i'$, $l=l'=m$ in Holliday's condition. This specialization contributes to obtaining a simplified axiomatic system.
  \end{remark}
  \begin{definition}[Truth]
    \label{df:tru}
    Let $\mathcal{S} = \langle W, R_Q, R_M, \rho \rangle$ be a quantum modal structure, $i \in W$,  and $\alpha \in F$. We define $\alpha$ to be true at $i$ in $\mathcal{S}$, denoted by $i \models_\mathcal{S} \alpha$, as follows:
    \begin{enumerate}
      \renewcommand{\labelenumi}{\alph{enumi})}
      \item $i \models_\mathcal{S} p$ iff $i \in \rho(p)$ for atomic formulas $p$
      \item $i \models_\mathcal{S} \alpha \wedge \beta$ iff $i \models_\mathcal{S} \alpha$ and $i \models_\mathcal{S} \beta$
      \item $i \models_\mathcal{S} \neg\alpha$ iff $\forall j \in W (R_Q(i,j) \Rightarrow j \not\models_\mathcal{S} \alpha)$
      \item $i \models_\mathcal{S} \Box\alpha$ iff $\forall l \in W (R_M(i,l) \Rightarrow l  \models_\mathcal{S} \alpha)$
    \end{enumerate}
  \end{definition}
  The following theorem demonstrates that the set of all worlds in which $\alpha$ is true becomes an $R_Q$-closed subset of $W$, not only when $\alpha$  is an atomic formula but also when it is any formula.
  The set of all $R_Q$-closed subsets forms an ortholattice, which is known as the algebraic structure of quantum logic.
  \begin{theorem}\label{th:bas}
    Let $\mathcal{S} = \langle W, R_Q, R_M, \rho \rangle$ be a quantum modal structure,  and $\alpha \in F$. Then,  the truth set of $\alpha$ in $\mathcal{S}$, i.e., $\{i \in W \mid i \models_\mathcal{S} \alpha\}$, is $R_Q$-closed.
  \end{theorem}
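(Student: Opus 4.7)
The approach is a straightforward induction on the structure of $\alpha$. For the argument it is helpful to rephrase $R_Q$-closedness of a set $X \subseteq W$ as $X = (X^{\perp})^{\perp}$, where $X^{\perp} := \{i \in W : \forall j \in W (R_Q(i,j) \Rightarrow j \notin X)\}$; unfolding the definitions shows this is equivalent to the condition stated in the text. One inclusion, $X \subseteq (X^{\perp})^{\perp}$, holds for every subset $X$ using only the symmetry of $R_Q$ (given $i \in X$ and $R_Q(i,j)$, pick $k := i$), so in each inductive case only the reverse inclusion will require attention.

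For the base case $\alpha = p$, the truth set is $\rho(p)$, which is $R_Q$-closed by assumption on $\rho$. For $\alpha = \beta \wedge \gamma$, the truth set is the intersection $X \cap Y$ of the two inductively closed truth sets, and $R_Q$-closure transfers to intersections in one line: any element of $((X \cap Y)^{\perp})^{\perp}$ automatically lies in $(X^{\perp})^{\perp} = X$ and in $(Y^{\perp})^{\perp} = Y$. For $\alpha = \neg\beta$, the truth set equals $X^{\perp}$, and the standard Galois-connection identity $X^{\perp} = ((X^{\perp})^{\perp})^{\perp}$, obtained from antitonicity of $(\cdot)^{\perp}$ together with the trivial inclusion applied to $X^{\perp}$, shows $X^{\perp}$ is $R_Q$-closed regardless of whether $X$ is.

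The main obstacle, and the only case that actually invokes the forcing condition on $R_M$, is expected to be $\alpha = \Box\beta$. Let $Y$ be the inductively $R_Q$-closed truth set of $\beta$ and let $Z := \{i \in W : \forall l \in W (R_M(i,l) \Rightarrow l \in Y)\}$ be the truth set of $\Box\beta$; the task reduces to $(Z^{\perp})^{\perp} \subseteq Z$. Given $i \in (Z^{\perp})^{\perp}$ and an arbitrary $l$ with $R_M(i,l)$, the plan is to instantiate the hypothesis at $j := i$, which is legal because $R_Q$ is reflexive; this yields some $k$ with $R_Q(i,k)$ and $k \in Z$. Applying the forcing condition to $R_M(i,l)$ together with $R_Q(i,k)$ produces $R_M(k,l)$, and then $k \in Z$ delivers $l \in Y$, as required. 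Thus $Z$ is $R_Q$-closed, completing the induction.
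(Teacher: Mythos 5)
Your proposal is correct, and its skeleton is the same as the paper's: one inclusion from symmetry of $R_Q$ alone, then induction on $\alpha$, with the only genuinely new work in the $\Box$ case, where you instantiate the closure hypothesis at $j:=i$ via reflexivity and combine the resulting witness $k$ with the forcing condition to get $R_M(k,l)$ --- this is exactly the paper's argument. What differs is the packaging: by recasting $R_Q$-closedness as $X=(X^{\perp})^{\perp}$ for the polarity $X^{\perp}=\{i \mid \forall j\,(R_Q(i,j)\Rightarrow j\notin X)\}$, you dispatch the $\wedge$ case by antitonicity of $(\cdot)^{\perp}$ (so $((X\cap Y)^{\perp})^{\perp}\subseteq (X^{\perp})^{\perp}\cap(Y^{\perp})^{\perp}$) and the $\neg$ case by the identity $X^{\perp}=((X^{\perp})^{\perp})^{\perp}$, which holds for \emph{arbitrary} $X$ and hence does not even consume the induction hypothesis; the paper instead verifies both cases by explicit quantifier manipulation (and nominally cites the induction hypothesis in the $\neg$ case, though there too it is not really needed). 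Your version buys concision and makes visible the lattice-theoretic reason the theorem holds --- truth sets land in the complete ortholattice of Galois-closed sets --- at the cost of a small preliminary verification that the two formulations of $R_Q$-closedness coincide, which you correctly flag and which does unfold as claimed.
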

  \begin{proof}
    Let $i \in W$. We need to show that
    \[ i \models_\mathcal{S} \alpha \textrm{ iff } \forall j \in W (R_Q(i,j) \Rightarrow \exists k \in W \textrm{ s.t. } (R_Q(j,k) \textrm{ and } k \models_\mathcal{S} \alpha)). \]
    The `only if' part follows immediately from the symmetry of $R_Q$. In the following, we show the `if' part by induction on the construction of $\alpha$.
    \begin{enumerate}
      \renewcommand{\labelenumi}{\alph{enumi})}
      \item Case: Atomic formulas.
      Immediate by the definition of $\rho$.
      \item Case: $\alpha \wedge \beta$.
      We show that $i \models_\mathcal{S} \alpha \wedge \beta$ if $\forall j \in W \ (R_Q(i,j) \Rightarrow \exists k \in W$ s.t. $(R_Q(j,k)$ and $k \models_\mathcal{S} \alpha \wedge \beta))$.
      Suppose $\forall j \in W \ (R_Q(i,j) \Rightarrow \exists k \in W$ s.t. $(R_Q(j,k)$ and $k \models_\mathcal{S} \alpha \wedge \beta))$. Then, we have $\forall j \in W \ (R_Q(i,j) \Rightarrow \exists k \in W$ s.t. $(R_Q(j,k)$ and $k \models_\mathcal{S} \alpha$  and  $k \models_\mathcal{S} \beta))$ by the definition of truth.
      Suppose $R_Q(i,j)$. Then, we have $\exists k \in W$ s.t. $(R_Q(j,k)$ and $k \models_\mathcal{S} \alpha$ and $k \models_\mathcal{S} \beta)$.
      Let us refer to $k$, whose existence is claimed here, as $k_j$.
      Then, we have $R_Q(j,k_j)$, $k_j \models_\mathcal{S} \alpha$, and $k_j \models_\mathcal{S} \beta$. Hence,
      $\exists k \in W$ s.t. $(R_Q(j,k)$ and $k \models_\mathcal{S} \alpha)$
      and
      $\exists k \in W$ s.t. $(R_Q(j,k)$ and $k \models_\mathcal{S} \beta)$.
      That is,
      $R_Q(i,j) \Rightarrow \exists k \in W$ s.t. $(R_Q(j,k)$ and $k \models_\mathcal{S} \alpha)$
      and
      $R_Q(i,j) \Rightarrow \exists k \in W$ s.t. $(R_Q(j,k)$ and $k \models_\mathcal{S} \beta)$.
      Since $j$ is arbitrary, we have $\forall j \in W \ (R_Q(i,j) \Rightarrow \exists k \in W$ s.t. $(R_Q(j,k)$ and $k \models_\mathcal{S} \alpha))$ and $\forall j \in W \ (R_Q(i,j) \Rightarrow \exists k \in W$ s.t. $(R_Q(j,k)$ and $k \models_\mathcal{S} \beta))$.
      Then, we have $i \models_\mathcal{S} \alpha$ and $i \models_\mathcal{S} \beta$ by the induction hypothesis.
      Finally, we have $i \models_\mathcal{S} \alpha \wedge \beta$ by the definition of truth.
      \item Case: $\neg \alpha$.
      We show that $i \models_\mathcal{S} \neg \alpha$ if $\forall j \in W (R_Q(i,j) \Rightarrow \exists k \in W$ s.t. $(R_Q(j,k)$ and $k \models_\mathcal{S} \neg\alpha))$.
      Suppose $\forall j \in W (R_Q(i,j) \Rightarrow \exists k \in W$ s.t. $(R_Q(j,k)$ and $k \models_\mathcal{S} \neg\alpha))$.
      Then, we have $\forall j \in W (R_Q(i,j) \Rightarrow \exists k \in W$ s.t. $(R_Q(j,k)$ and $\forall l \in W(R_Q(k,l) \Rightarrow l \not\models_\mathcal{S} \alpha)))$ by the definition of truth.
      Then, we have
      $\forall j \in W (R_Q(i,j) \Rightarrow j \not\models_\mathcal{S} \alpha)$ by the induction hypotheis.
      Finally, we have $i \models_\mathcal{S} \neg \alpha$ by the definition of truth.
      \item Case: $\Box\alpha$.
      We show that $i \models_\mathcal{S} \Box\alpha$ if $\forall j \in W (R_Q(i,j) \Rightarrow \exists k \in W$ s.t. $(R_Q(j,k)$ and $k \models_\mathcal{S} \Box\alpha))$.
      Suppose $\forall j \in W (R_Q(i,j) \Rightarrow \exists k \in W$ s.t. $(R_Q(j,k)$ and $k \models_\mathcal{S} \Box\alpha))$.
      Then, we have $\forall j \in W (R_Q(i,j) \Rightarrow \exists k \in W$ s.t. $(R_Q(j,k)$ and
      $\forall l \in W (R_M(k,l) \Rightarrow l \models_\mathcal{S} \alpha)))$ (1) by the definition of truth.
      Suppose $R_M(i,l)$.
      Then, we have
      $\forall j \in W (R_Q(i,j) \Rightarrow R_M(j,l))$ (2)
      by the forcing condition.
      Let $j$ be $i$ in  (1). Since $R_Q(i,i)$ by the reflexivity of $R_Q$, we have
      $\exists k \in W$ s.t. $(R_Q(i,k)$ and
      $\forall l \in W (R_M(k,l) \Rightarrow l \models_\mathcal{S} \alpha))$.
      Let us refer to $k$, whose existence is claimed here, as $k_i$.
      Then, we have $R_Q(i,k_i)$ and
      $\forall l \in W (R_M(k_i,l) \Rightarrow l \models_\mathcal{S} \alpha)$ (3).
      Let $j$ be $k_i$ in (2). Since $R_Q(i,k_i)$, we have $R_M(k_i,l)$.
      Hence, $l \models_\mathcal{S} \alpha$ by (3).
      That is,
      $\forall l \in W (R_M(i,l) \Rightarrow l \models_\mathcal{S} \alpha)$.
      Finally, we have $i \models_\mathcal{S} \Box\alpha$
      by the definition of truth.
    \end{enumerate}
  \end{proof}
  Here is an example of the interpretation of a formula based on the above definition.
  \begin{example}
    To see if $i \models_\mathcal{S} \neg\Box p$, we need to check if  $\forall j \in W (R_Q(i,j) \Rightarrow j \not\models_\mathcal{S} \Box p)$. For this, assuming $R_Q(i.j)$, we need to check if $\exists k \in W$ s.t. $R_M(j,k)$ and $k \not\in\rho(p)$.
    \qed
  \end{example}
  The following remark shows that the meaning of $\Diamond$ is different from that in classical modal logic.
  \begin{remark}
    \label{re:dia}
    Since $i \models_\mathcal{S} \Diamond \alpha$ is the abbreviation of $i \models_\mathcal{S} \neg\Box\neg\alpha$, it is interpreted as $\forall j \in W (R_Q(i,j) \Rightarrow \exists k \in W$ s.t. ($R_M(j,k)$ and $\exists l \in W$ s.t. $(R_Q(k,l)$ and $l \models_\mathcal{S} \alpha )))$. This interpretation generally differs from the classical modal logic interpretation of $i \models_\mathcal{S} \Diamond \alpha$, which is  $\exists j \in W$ s.t. $(R_M(i,j)$ and $j \models_\mathcal{S} \alpha)$.
    Note that, according to Theorem \ref{th:bas}, this is equivalent to $\forall j \in W (R_Q(i,j) \Rightarrow \exists k \in W$ s.t. ($R_Q(j,k)$ and $\exists l \in W$ s.t. $(R_M(k,l)$ and $l \models_\mathcal{S} \alpha )))$.
    It is evident that both interpretations coincide when $R_Q = R_M$.
    \qed
  \end{remark}
  Let $\mathcal{S} = \langle W, R_Q, R_M, \rho \rangle$ be a quantum modal structure.
  We write  $\Gamma \models_\mathcal{S} \alpha$ to mean $\forall i \in W (i \models_\mathcal{S} \Gamma \Rightarrow i \models_\mathcal{S} \alpha)$. Here, $i \models_\mathcal{S} \Gamma$ denotes that $\forall \gamma \in \Gamma (i \models_\mathcal{S} \gamma)$. We write $\Gamma \models \alpha$ to mean $\Gamma \models_\mathcal{S} \alpha$ for any quantum modal structure $\mathcal{S}$. Finally, we write $\Gamma \models \Delta$ to mean $\exists \alpha \in \Delta$ s.t. $\Gamma \models \alpha$.
  \subsection{Axiomatization}
  \label{sub:axi}
  An axiomatic system for $\mathbf{QML}$ is presented below. The part concerning non-modal formulas is based on Nishimura \cite{Nish1980}.
  Expressions of the form $\Gamma \vdash \Delta$ in axioms and rules are called {\it sequents}, which represent the claim that at least one formula in $\Delta$ can be derived from a finite number of formulas in $\Gamma$.
  In general, we write (possibly empty) sets of formulas on the left and right sides of $\vdash$.
  We denote $\alpha, \Gamma$ and $\Gamma, \Delta$ as abbreviations for ${\{\alpha\}} \cup \Gamma$ and $\Gamma \cup \Delta$, respectively. Additionally, we denote $\neg \Gamma$ as $\{\neg \gamma \mid \gamma \in \Gamma\}$ and $\Box \Gamma$ as $\{\Box \gamma \mid \gamma \in \Gamma\}$.
  \begin{definition}[Axioms and rules]
    \begin{itemize}
      \item Axioms
      \begin{prooftree}
        \AxiomC{$\alpha \vdash \alpha$  \hspace{1.0mm} \scriptsize{(AX)}}
      \end{prooftree}
      \begin{prooftree}
        \AxiomC{$\Gamma \vdash \Box\alpha, \neg\Box\alpha$  \hspace{1.0mm}   \scriptsize{(MEM)}}
      \end{prooftree}
      \item Rules
      \begin{prooftree}
        \AxiomC{$\Gamma \vdash \Delta$}
        \RightLabel{\scriptsize{(WKN)}}
        \UnaryInfC{$\Pi, \Gamma \vdash \Delta, \Sigma$}
      \end{prooftree}
      \begin{prooftree}
        \AxiomC{$\Gamma_1 \vdash \Delta_1, \alpha$}
        \AxiomC{$\alpha, \Gamma_2 \vdash \Delta_2$}
        \RightLabel{\scriptsize{(CUT)}}
        \BinaryInfC{$\Gamma_1, \Gamma_2 \vdash  \Delta_1, \Delta_2$}
      \end{prooftree}
      \begin{prooftree}
        \AxiomC{$\alpha, \Gamma \vdash \Delta$}
        \RightLabel{\scriptsize{(${\rm \wedge {\rm l}_1}$)}}
        \UnaryInfC{$\alpha \wedge \beta, \Gamma \vdash \Delta$}
      \end{prooftree}
      \begin{prooftree}
        \AxiomC{$\beta, \Gamma \vdash \Delta$}
        \RightLabel{\scriptsize{(${\rm \wedge {\rm l}_2}$)}}
        \UnaryInfC{$\alpha \wedge \beta, \Gamma \vdash\Delta$}
      \end{prooftree}
      \begin{prooftree}
        \AxiomC{$\Gamma \vdash \Delta, \alpha$}
        \AxiomC{$\Gamma \vdash \Delta, \beta$}
        \RightLabel{\scriptsize{(${\rm \wedge r}$)}}
        \BinaryInfC{$\Gamma \vdash \Delta, \alpha \wedge \beta$}
      \end{prooftree}
      \begin{prooftree}
        \AxiomC{$\Gamma \vdash \Delta, \alpha$}
        \RightLabel{\scriptsize{(${\rm \neg l}$)}}
        \UnaryInfC{$\neg \alpha, \Gamma \vdash \Delta$}
      \end{prooftree}
      \begin{prooftree}
        \AxiomC{$\alpha \vdash \Delta$}
        \RightLabel{\scriptsize{(${\rm \neg r}$)}}
        \UnaryInfC{$\neg \Delta \vdash \neg \alpha$}
      \end{prooftree}
      \begin{prooftree}
        \AxiomC{$\alpha, \Gamma \vdash \Delta$}
        \RightLabel{\scriptsize{(${\rm \neg \neg l}$)}}
        \UnaryInfC{$\neg \neg \alpha, \Gamma \vdash \Delta$}
      \end{prooftree}
      \begin{prooftree}
        \AxiomC{$\Gamma \vdash \Delta, \alpha$}
        \RightLabel{\scriptsize{(${\rm \neg \neg r}$)}}
        \UnaryInfC{$\Gamma \vdash \Delta, \neg \neg \alpha$}
      \end{prooftree}
      \begin{prooftree}
        \AxiomC{$\Gamma \vdash \alpha$}
        \RightLabel{\scriptsize{({\bf K})}}
        \UnaryInfC{$\Box\Gamma \vdash \Box\alpha$}
      \end{prooftree}
    \end{itemize}
  \end{definition}
  \begin{remark}
    The axiomatic system for {\bf QML} consists of Nishimura's sequent calculus for non-modal quantum logic \cite{Nish1980}, along with the axiom MEM and the rule {\bf K}.
    MEM is a meta-logical version of the law of excluded middle that applies to formulas beginning with $\Box$, ensuring that either $\Box\alpha$ or $\neg\Box\alpha$ is derivable.
    {\bf K} is the usual rule also adopted in classical normal modal logics.
  \end{remark}
  \begin{remark}
    It can be shown as follows that our $\mathbf{QML}$ is an extension of non-modal orthologic.
    \begin{enumerate}
      \item Soundness and completeness are established between the semantics and the axiomatic system for $\mathbf{QML}$.
      \item Restricted to non-modal formulas, the axiomatic system for $\mathbf{QML}$ is equivalent to the system for Goldblatt's orthologic.
    \end{enumerate}
    We will demonstrate 1 in the next section, while 2 can be referenced in Nishimura \cite{Nish1980}.\qed
  \end{remark}
  In the following, we provide various definitions regarding the axiomatic system.
  A {\it derivation} is a finite sequence of sequents, where each sequent in the sequence is either an axiom or the lower sequent of a rule, with all upper sequents having already appeared in the sequence.
  We say that $\Gamma \vdash \Delta$ is {\it derivable}, or a {\it theorem}, if there exists a derivation where this sequent appears as the last element.
  Henceforth, we will simply use $\Gamma \vdash \Delta$ to mean that $\Gamma \vdash \Delta$ is derivable.
  We say that $\Gamma$ is {\it inconsistent} if there exists a formula $\alpha$ such that $\Gamma \vdash \alpha$ and $\Gamma \vdash \neg\alpha$.
  If $\Gamma$ is inconsistent, then $\Gamma \vdash \delta$ holds for any formula $\delta$.
  Indeed, The following derivation exists:
  \begin{prooftree}
    \AxiomC{$\Gamma \vdash \neg\alpha$}
    \AxiomC{$\Gamma \vdash \alpha$}
    \RightLabel{\scriptsize{($\neg{\rm l}$)}}
    \UnaryInfC{$\neg\alpha, \Gamma\vdash$}
    \RightLabel{\scriptsize{(${\rm CUT}$)}}
    \BinaryInfC{$\Gamma \vdash$}
    \RightLabel{\scriptsize{(WKN)}}
    \UnaryInfC{$\Gamma\vdash \delta$}
  \end{prooftree}
  We say that $\Gamma$ is {\it consistent} if it is not inconsistent.
  $\Gamma$ is consistent iff there exists some formula $\delta$ such that $\Gamma \not\vdash \delta$.
  A {\it deductive closure} of $\Gamma$, denoted $\overline{\Gamma}$, is defined as $\{ \gamma \mid \Gamma \vdash \gamma \}$.
  We say that $\Gamma$ is {\it deductively closed} if $\Gamma = \overline{\Gamma}$.
  The deductive closure of a consistent set remains consistent: If it were not, the original set would be inconsistent.
  We say that $\Gamma$ and $\Delta$ are {\it compatible} if $\forall \alpha \in F (\Gamma \vdash \alpha \Rightarrow \Delta \not\vdash \neg\alpha)$.
  Note that compatibility is symmetric: Suppose $\forall \alpha \in F (\Gamma \vdash \alpha \Rightarrow \Delta \not\vdash \neg\alpha)$. Then, we have $\forall \alpha \in F (\Delta\vdash \neg\neg\alpha \Rightarrow \Gamma \not\vdash \neg\alpha)$ (1).
  Now, suppose $\Delta \vdash \alpha$. Then, we have $\Delta \vdash \neg\neg\alpha$ by $\neg\neg$r.
  Finally, we have $\Gamma \not\vdash \neg\alpha$ by (1).

  The following theorem will play an important role in the proof of the completeness theorem in the next section.
  \begin{theorem}[Weak Lindenbaum theorem \cite{Chia2002}]\label{th:WLT}
    If $\Gamma \not\vdash \neg \alpha$, then there exists a consistent and deductively closed set $\Gamma^*$ of formulas that is compatible with $\Gamma$ and such that $\Gamma^* \vdash \alpha$.
  \end{theorem}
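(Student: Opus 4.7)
The plan is to construct $\Gamma^*$ as a maximal (by inclusion) set containing $\alpha$ and compatible with $\Gamma$, and then to read off the remaining properties from that maximality.

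First I would observe that $\{\alpha\}$ is itself compatible with $\Gamma$. Indeed, if $\{\alpha\} \vdash \beta$, then the rule $\neg$r gives $\neg\beta \vdash \neg\alpha$, so a CUT against any hypothetical $\Gamma \vdash \neg\beta$ would yield $\Gamma \vdash \neg\alpha$, contradicting the hypothesis. Hence the family $\mathcal{F}$ of supersets of $\{\alpha\}$ compatible with $\Gamma$ is nonempty.

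Next I would apply Zorn's lemma to $(\mathcal{F}, \subseteq)$. The chain condition rests on a finitariness argument: if $\{\Sigma_i\}$ is a chain in $\mathcal{F}$ and $\bigcup_i \Sigma_i \vdash \beta$, then since any derivation uses only finitely many antecedent formulas and the chain is linearly ordered, already $\Sigma_{i_0} \vdash \beta$ for some $i_0$, so $\Gamma \not\vdash \neg\beta$ by compatibility of $\Sigma_{i_0}$. Let $\Gamma^*$ be a maximal element of $\mathcal{F}$. Then $\Gamma^* \vdash \alpha$ is immediate since $\alpha \in \Gamma^*$. For deductive closure, given $\Gamma^* \vdash \beta$, the extension $\Gamma^* \cup \{\beta\}$ derives exactly the same theorems as $\Gamma^*$ and therefore still belongs to $\mathcal{F}$; maximality forces $\beta \in \Gamma^*$.

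The only remaining point is consistency. If $\Gamma^*$ were inconsistent it would derive every formula, in particular $p \wedge \neg p$ for an atomic $p$. A short derivation using AX, $\wedge$l, $\neg$l, CUT, and $\neg$r produces $\vdash \neg(p \wedge \neg p)$, whence $\Gamma \vdash \neg(p \wedge \neg p)$ by WKN, contradicting the compatibility of $\Gamma^*$ with $\Gamma$. I expect this consistency step to be the main obstacle, since unlike in classical Lindenbaum arguments compatibility does not by itself exclude inconsistency: one must exploit the provability of a tautology of the form $\neg(\gamma \wedge \neg\gamma)$ to convert the hypothesis that $\Gamma^*$ derives everything into a concrete failure of compatibility. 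All other steps are bookkeeping once the finitary character of derivations is noted.
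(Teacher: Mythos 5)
Your proof is correct, but it takes a noticeably heavier route than the paper. The paper simply sets $\Gamma^* \equiv \overline{\{\alpha\}}$, the deductive closure of $\{\alpha\}$: deductive closure is then automatic, $\Gamma^* \vdash \alpha$ is immediate from AX, and compatibility with $\Gamma$ is exactly your opening observation (if $\alpha \vdash \beta$ then $\neg\beta \vdash \neg\alpha$ by $\neg$r, so $\Gamma \vdash \neg\beta$ would give $\Gamma \vdash \neg\alpha$ by CUT). That contraposition is the one genuine idea, and you have it; everything you add on top of it --- Zorn's lemma, the chain argument, maximality --- buys you a \emph{maximal} compatible extension, which the theorem does not ask for, at the price of the axiom of choice and of a finitariness lemma (every derivable sequent has a derivable finite sub-sequent) that you assert but would need to prove by induction on derivations; the paper does tacitly rely on the same finitariness elsewhere (in the $\Box$ case of the truth lemma), so this is not a gap, just an extra dependency. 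Where your write-up is actually more careful than the paper's is the consistency check: you correctly note that compatibility alone does not formally rule out inconsistency and discharge it by deriving the theorem $\vdash \neg(p \wedge \neg p)$ (via AX, $\neg$l, $\wedge{\rm l}_1$, $\wedge{\rm l}_2$, $\neg$r) and weakening to $\Gamma \vdash \neg(p \wedge \neg p)$; the paper leaves consistency implicit, where it follows for $\overline{\{\alpha\}}$ because an inconsistent $\{\alpha\}$ yields $\alpha \vdash$ with empty succedent, hence $\vdash \neg\alpha$ by $\neg$r and $\Gamma \vdash \neg\alpha$ by WKN, contradicting the hypothesis. In short: same key lemma, different scaffolding; the paper's direct construction is shorter and choice-free, while yours proves a slightly stronger existence statement and makes the consistency step explicit.
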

  \begin{proof}
    Suppose $\Gamma \not\vdash \neg \alpha$.
    Let $\Gamma^* \equiv \overline{\{\alpha\}}$. Then, we have $\Gamma^* \vdash \alpha$ by AX and WKN.
    Moreover, $\Gamma$ and $\Gamma^*$ are compatible:
    If they were not, there would exist some $\beta$ such that $\Gamma^*\vdash\beta$ and $\Gamma \vdash \neg\beta$, implying $\alpha \vdash \beta$ from $\Gamma^*\vdash\beta$, and $\neg\beta \vdash \neg\alpha$ by $\neg r$. This, along with $\Gamma \vdash \neg\beta$, would lead to $\Gamma \vdash \neg\alpha$ by CUT, resulting in a contradiction.
  \end{proof}
  \section{Soundness and Completeness}
  \label{sec:sac}
  In this section, we will provide proofs of two theorems (Theorem \ref{th:sou}, \ref{th:com}) that bridge the semantics and the axiomatic system for $\mathbf{QML}$.
  \subsection{Soundness}
  \label{sub:sou}
  \begin{theorem}[Soundness]
    \label{th:sou}
    $\Gamma \vdash \Delta \Rightarrow \Gamma \models \Delta$.
  \end{theorem}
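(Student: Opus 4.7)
The plan is to proceed by induction on the derivation of $\Gamma \vdash \Delta$, verifying that each axiom is semantically valid and that each inference rule preserves validity. Throughout, I fix a quantum modal structure $\mathcal{S} = \langle W, R_Q, R_M, \rho\rangle$ and a world $i \in W$, and read $\Gamma \models_\mathcal{S} \Delta$ in the standard sequent-calculus way: whenever $i \models_\mathcal{S} \gamma$ for every $\gamma \in \Gamma$, some $\delta \in \Delta$ satisfies $i \models_\mathcal{S} \delta$.

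The axiom AX is immediate. For MEM, it suffices to show that at every world $i$ either $i \models_\mathcal{S} \Box\alpha$ or $i \models_\mathcal{S} \neg\Box\alpha$. Suppose the first fails, so that some $l$ satisfies $R_M(i,l)$ and $l \not\models_\mathcal{S} \alpha$. For any $j$ with $R_Q(i,j)$, the forcing condition yields $R_M(j,l)$, hence $j \not\models_\mathcal{S} \Box\alpha$, and this gives $i \models_\mathcal{S} \neg\Box\alpha$. This is the only step in which the forcing condition is essential.

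For the rules, WKN, CUT, and the three $\wedge$-rules are routine. For $\neg l$, if $i \models_\mathcal{S} \neg\alpha$ and $i \models_\mathcal{S} \Gamma$ but no $\delta \in \Delta$ is true at $i$, the premise yields $i \models_\mathcal{S} \alpha$, contradicting $i \models_\mathcal{S} \neg\alpha$ via reflexivity $R_Q(i,i)$. For $\neg r$, assume $i \models_\mathcal{S} \neg\delta$ for every $\delta \in \Delta$; take $j$ with $R_Q(i,j)$ and suppose toward contradiction that $j \models_\mathcal{S} \alpha$; the premise supplies some $\delta \in \Delta$ with $j \models_\mathcal{S} \delta$, contradicting what $i \models_\mathcal{S} \neg\delta$ forces at $j$. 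For $\neg\neg l$ and $\neg\neg r$, Theorem \ref{th:bas} shows that the truth sets of $\alpha$ and $\neg\neg\alpha$ coincide, so these rules are immediate. For the modal rule \textbf{K}, assuming $\Gamma \models \alpha$ and $i \models_\mathcal{S} \Box\Gamma$, every $l$ with $R_M(i,l)$ satisfies $l \models_\mathcal{S} \gamma$ for each $\gamma \in \Gamma$, whence $l \models_\mathcal{S} \alpha$, yielding $i \models_\mathcal{S} \Box\alpha$.

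The main obstacle is MEM, which is the unique point where the forcing condition is consumed and where the modal axiomatization genuinely interacts with the quantum-logical semantics; here both the symmetry of $R_Q$ (to apply the forcing condition in the right direction) and the forcing condition itself are indispensable. A secondary but notable subtlety is the $\neg\neg$ pair, which tacitly relies on Theorem \ref{th:bas}; without the $R_Q$-closure built into $\mathcal{S}$, the equivalence of $\alpha$ and $\neg\neg\alpha$ would fail and these rules would not be sound.
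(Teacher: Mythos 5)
Your proposal is correct and follows essentially the same route as the paper's proof: induction on the derivation, with AX and the structural and $\wedge$-rules routine, MEM discharged by passing the witness $l$ through the forcing condition on $R_M$, the $\neg\neg$-rules reduced to Theorem \ref{th:bas}, and \textbf{K} handled in the standard way. The only divergence is that you read $\Gamma \models_\mathcal{S} \Delta$ as a world-local disjunction over $\Delta$, whereas the paper officially defines $\Gamma \models \Delta$ as $\exists \delta \in \Delta$ with $\Gamma \models \delta$ uniformly; your reading is in fact the one the paper's own MEM case tacitly requires, so this is a clarification rather than a gap.
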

  \begin{proof}
    Suppose $\Gamma \vdash \Delta$. Then, we have $\Gamma \vdash \alpha$ for some $\alpha \in \Delta$.
    Let $\mathcal{S}$ be a quantum modal structure.
    We show $\Gamma \models_\mathcal{S} \alpha$ by induction on the construction of the derivation.
    We proceed by cases according to the rule used at the end of the derivation of $\Gamma \vdash \alpha$.
    \begin{enumerate}
      \renewcommand{\labelenumi}{\alph{enumi})}
      \item Case: AX.
      It is evident that  $\forall i \in W (i \models_\mathcal{S} \alpha \Rightarrow i \models_\mathcal{S} \alpha)$. Hence, we have $\alpha \models_\mathcal{S} \alpha$.
      \item Case: MEM.
      Suppose $i \models_\mathcal{S} \Gamma$.
      Furthermore, suppose $i \not\models_\mathcal{S} \Box\alpha$, which means $\exists l \in W$ s.t. $(R_M(i,l)$ and $l \not\models_\mathcal{S} \alpha)$.
      Let us refer to $l$, whose existence is claimed here, as $l_i$.
      Now, suppose $R_Q(i,j)$, then we have $R_M(j,l_i)$ by the forcing condition on $R_M$.
      Therefore, $\forall j \in W(R_Q(i,j) \Rightarrow \exists l \in W$ s.t $(R_M(j,l)$ and $l \not \models_\mathcal{S} \alpha))$.
      Finally, we have  $i \models_\mathcal{S} \neg\Box\alpha$ by the definition of truth.
      \item Case: CUT.
      The case is divided into the following subcases i and ii.
      \begin{enumerate}
        \renewcommand{\labelenumii}{\roman{enumii})}
        \item
        Suppose that $\forall i \in W(i \models_\mathcal{S} \Gamma_1 \Rightarrow i \models_\mathcal{S} \delta_1)$ for some $\delta_1 \in \Delta_1$ (1).
        Furthermore, suppose $i \models_\mathcal{S} \Gamma_1 \cup \Gamma_2$. Then, we have $i \models_\mathcal{S} \delta_1$ by (1).
        \item Suppose that $\forall i \in W (i \models_\mathcal{S} \Gamma_1 \Rightarrow i \models_\mathcal{S} \alpha)$ (1) and that $\forall i \in W (i \models_\mathcal{S} \{\alpha\} \cup \Gamma_2 \Rightarrow i \models_\mathcal{S} \delta_2)$ for some $\delta_2 \in \Delta_2$ (2).
        Furthermore, suppose $i \models_\mathcal{S} \Gamma_1 \cup \Gamma_2$ (3).
        Then, we have $i \models_\mathcal{S} \alpha$ by (1).
        This, along with (3), leads to $i \models_\mathcal{S} \delta_2$ by (2).
      \end{enumerate}
      \item Case: $\wedge {\rm l}_1$.
      Let $\delta \in \Delta$.
      Suppose that $\forall i \in W (i \models_\mathcal{S} \{\alpha\} \cup \Gamma \Rightarrow i \models_\mathcal{S} \delta)$ (1).
      Furthermore, suppose $i \models_\mathcal{S} \{\alpha \wedge \beta\} \cup \Gamma$.
      Here, $i \models_\mathcal{S} \alpha \wedge \beta$ implies $i \models_\mathcal{S} \alpha$ by the definition of truth.
      Then, we have $i \models_\mathcal{S} \delta$ by (1).
      \item Case: $\wedge {\rm l}_2$.
      Same as above.
      \item Case: $\wedge$r.
      The case is divided into the following subcases i and ii.
      \begin{enumerate}
        \renewcommand{\labelenumii}{\roman{enumii})}
        \item Suppose that $\forall i \in W (i \models_\mathcal{S} \Gamma \Rightarrow i \models_\mathcal{S} \delta)$ for some $\delta \in \Delta$ (1).
        Furthermore, suppose $i \models_\mathcal{S} \Gamma$.
        Then, we have $i \models_\mathcal{S} \delta$ by (1).
        \item
        Suppose that $\forall i \in W (i \models_\mathcal{S} \Gamma \Rightarrow i \models_\mathcal{S} \alpha)$ (1) and that $\forall i \in W (i \models_\mathcal{S} \Gamma \Rightarrow i \models_\mathcal{S} \beta)$ (2).
        Furthermore, suppose $i \models_\mathcal{S} \Gamma$.
        Then, we have $i \models_\mathcal{S} \alpha$ by (1), and $i \models_\mathcal{S} \beta$ by (2). Finally, we have $i \models_\mathcal{S} \alpha \wedge \beta$ by the definition of truth.
      \end{enumerate}
      \item Case: $\neg$ l.
      The case is divided into the following subcases i and ii.
      \begin{enumerate}
        \renewcommand{\labelenumii}{\roman{enumii})}
        \item Suppose that $\forall i \in W (i \models_\mathcal{S} \Gamma \Rightarrow i \models_\mathcal{S} \delta)$ for some $\delta \in \Delta$ (1).
        Furthermore, suppose $i \models_\mathcal{S} \{\neg \alpha\} \cup \Gamma$.
        Then, we have $i \models_\mathcal{S} \delta$ by (1).
        \item Suppose that $\forall i \in W (i \models_\mathcal{S} \Gamma \Rightarrow i \models_\mathcal{S} \alpha)$ (1).
        Furthermore, suppose $i \models_\mathcal{S} \{\neg \alpha\} \cup \Gamma$ (2).
        Then, we have $i \models_\mathcal{S} \alpha$ by (1), and $\forall j \in W (R_Q(i,j) \Rightarrow j \not\models_\mathcal{S} \alpha)$ by (2). From the latter, we have $i \not\models_\mathcal{S} \alpha$ by the reflexivity of $R_Q$.
        Thus, we have both $i \models_\mathcal{S} \alpha$ and $i \not\models_\mathcal{S} \alpha$, which is a case that never occurs.
        Therefore, we have that $i \models_\mathcal{S} \{\neg \alpha\} \cup \Gamma \Rightarrow i \models_\mathcal{S} \delta$ for some (in fact, any) $\delta \in \Delta$.
      \end{enumerate}
      \item Case: $\neg$r.
      Let $\delta \in \Delta$.
      Suppose that $\forall i \in W (i \models_\mathcal{S} \alpha \Rightarrow i \models_\mathcal{S} \delta)$ (1).
      Furthermore, suppose $i \models_\mathcal{S} \neg\Delta$.
      Then, we have
      $i \models_\mathcal{S} \neg\delta$. That is, $\forall j \in W(R_Q(i,j) \Rightarrow j \not\models_\mathcal{S} \delta)$  (2).
      Suppose $R_Q(i,j)$. Then, we have $j \not\models_\mathcal{S} \delta$ by (2), and $j \not\models_\mathcal{S} \alpha$ by (1).
      Therefore, $\forall j \in W (R_Q(i,j) \Rightarrow j \not\models_\mathcal{S} \alpha)$.
      Finally, we have $i \models_\mathcal{S} \neg\alpha$ by the definition of truth.
      \item Case: $\neg\neg$ l.
      Let $\delta \in \Delta$.
      Suppose that $\forall i \in W (i \models_\mathcal{S} \{\alpha\} \cup \Gamma \Rightarrow i \models_\mathcal{S} \delta)$ (1).
      Furthermore, suppose $i \models_\mathcal{S} \{\neg\neg\alpha\} \cup \Gamma$.
      Here, $i \models_\mathcal{S} \neg\neg\alpha$ means that
      $\forall j \in W (R_Q(i,j) \Rightarrow j \not\models_\mathcal{S} \neg\alpha)$
      by the definition of truth.
      That is, $\forall j \in W (R_Q(i,j) \Rightarrow \exists k \in W$ s.t. $(R_Q(j,k)$ and $k \models_\mathcal{S} \alpha))$.
      Then, we have  $i \models_\mathcal{S} \alpha$ by Theorem \ref{th:bas}.
      Finally, we have $i \models_\mathcal{S} \delta$ by (1).
      \item Case: $\neg\neg$ r.
      The case is divided into the following subcases i and ii.
      \begin{enumerate}
        \renewcommand{\labelenumii}{\roman{enumii})}
        \item Suppose that $\forall i \in W (i \models_\mathcal{S} \Gamma \Rightarrow i \models_\mathcal{S} \delta)$ for some $\delta \in \Delta$ (1).
        Furthermore, suppose
        $i \models_\mathcal{S} \Gamma$.
        Then, we have $i \models_\mathcal{S} \delta$ by (1).
        \item  Suppose that $\forall i \in W (i \models_\mathcal{S} \Gamma \Rightarrow i \models_\mathcal{S} \alpha)$ (1).
        Furthermore, suppose $i \models_\mathcal{S} \Gamma$.
        Then, we have $i \models_\mathcal{S} \alpha$ by (1).
        That is, $\forall j \in W (R_Q(i,j) \Rightarrow \exists k \in W$ s.t. $(R_Q(j,k)$  and $k \models_\mathcal{S} \alpha))$ by Theorem \ref{th:bas}.
        This means that
        $\forall j \in W (R_Q(i,j) \Rightarrow j \not\models_\mathcal{S} \neg\alpha)$.
        Finally, we have $i \models_\mathcal{S} \neg\neg\alpha$ by the definition of truth.
      \end{enumerate}
      \item Case: ${\bf K}$.
      Suppose that $\forall i \in W (i \models_\mathcal{S} \Gamma \Rightarrow i \models_\mathcal{S} \alpha)$ (1).
      Furthermore, suppose $i \models_\mathcal{S} \Box\Gamma$.
      Then, we have $\forall \gamma \in \Gamma (i \models_\mathcal{S} \Box\gamma)$.
      That is,
      $\forall \gamma \in \Gamma (\forall j \in W (R_M(i,j) \Rightarrow j \models_\mathcal{S} \gamma))$.
      Suppose $R_M(i,j)$.
      Then, we have
      $\forall \gamma \in \Gamma(j \models _\mathcal{S} \gamma)$.
      This leads to $j \models_\mathcal{S} \alpha$ by (1).
      Therefore, we have
      $\forall j \in W (R_M(i,j) \Rightarrow j \models_\mathcal{S} \alpha)$.
      Finally, we have
      $i \models_\mathcal{S} \Box\alpha$
      by the definition of truth.
    \end{enumerate}
  \end{proof}
  \subsection{Completeness}
  \label{sub:com}
  The flow of the proof is based on Dalla Chiara and Giuntini \cite{Chia2002}, which addresses non-modal quantum logic.
  We construct the canonical model and verify that it is a quantum modal structure.
  In this model, truth corresponds to derivability.
  Therefore, if we assume that a formula is not derivable, it is false in the canonical model. Thus, completeness holds.
  \begin{definition}[Canonical model]
    We construct the canonical model $\mathcal{S_C}=\langle W, R_Q, R_M, \rho\rangle$ as follows:
    \begin{itemize}
      \item $W$: The set of all deductively closed consistent sets of formulas.
      \item $R_Q(i,j)$ iff $i$ and $j$ are compatible, i.e., $\forall \alpha  \in F (i \vdash \alpha \Rightarrow j \not\vdash \neg\alpha)$.
      \item $R_M(i,l)$ iff  $\forall \alpha \in F (i \vdash \Box\alpha \Rightarrow l \vdash \alpha)$.
      \item $\rho(p) \equiv \{i \in W \mid p \in i \}$ for atomic formulas $p$.
    \end{itemize}
  \end{definition}
  The following lemma demonstrates that the canonical model defined in this manner is indeed a quantum modal structure.
  \begin{lemma}
    Let $\mathcal{S_C}=\langle W, R_Q, R_M, \rho\rangle$ be the canonical model.
    Then, the following holds:
    \begin{itemize}
      \item $R_Q$ is reflexive and symmetric.
      \item $R_M$ is forced by $R_Q$.
      \item $\rho(p)$ is $R_Q$-closed for atomic formulas $p$.
    \end{itemize}
  \end{lemma}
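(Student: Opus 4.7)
The plan is to verify the three bullet points in turn, leaning on facts already assembled earlier in the paper, namely the symmetry of compatibility, the double-negation rules $\neg\neg r/\neg\neg l$, the MEM axiom, and the Weak Lindenbaum theorem (Theorem~\ref{th:WLT}).

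For reflexivity of $R_Q$, I would just unfold the definition: for any $i \in W$, consistency of $i$ means no $\alpha$ satisfies both $i \vdash \alpha$ and $i \vdash \neg\alpha$, which is exactly $\forall\alpha(i\vdash\alpha\Rightarrow i\not\vdash\neg\alpha)$, i.e., $R_Q(i,i)$. Symmetry is then just the observation (already recorded in the text before Theorem~\ref{th:WLT}) that compatibility is symmetric on arbitrary sets. These are essentially one-liners.

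For the forcing condition on $R_M$, assume $R_M(i,l)$ and $R_Q(i,j)$ and let $\alpha$ be any formula with $j\vdash\Box\alpha$; the goal is $l\vdash\alpha$. The leverage point is MEM: the derivable sequent $i \vdash \Box\alpha,\neg\Box\alpha$ together with the fact that derivability on the right means some formula in the succedent is actually derivable tells us that $i\vdash\Box\alpha$ or $i\vdash\neg\Box\alpha$. But compatibility $R_Q(i,j)$ together with $j\vdash\Box\alpha$ rules out $i\vdash\neg\Box\alpha$, so $i\vdash\Box\alpha$. Then $R_M(i,l)$ yields $l\vdash\alpha$, which is the desired conclusion.

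For $R_Q$-closedness of $\rho(p)$, the ``only if'' direction is easy: given $i\in\rho(p)$ and $R_Q(i,j)$, take $k:=i$ and use symmetry of $R_Q$. The ``if'' direction is the one that carries real content and will be the main obstacle, since we must produce $p\in i$ from an iterated ``$\forall j\exists k$'' condition. I would argue by contraposition: if $p\notin i$, i.e., $i\not\vdash p$, then by the equivalence $i\vdash p \Leftrightarrow i\vdash\neg\neg p$ (via $\neg\neg l$, $\neg\neg r$, CUT) we have $i\not\vdash\neg\neg p$, so Theorem~\ref{th:WLT} applied to $\neg p$ produces a deductively closed consistent $j$ compatible with $i$ (hence $R_Q(i,j)$) with $j\vdash\neg p$. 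For any $k\in W$ with $R_Q(j,k)$, compatibility gives $k\not\vdash\neg\neg p$, hence $k\not\vdash p$, i.e., $k\notin\rho(p)$. This witnesses failure of the closure condition at $i$, completing the contrapositive. The delicate part is remembering to pass through $\neg\neg p$ rather than $p$ directly when invoking the Weak Lindenbaum theorem, because it is negations that interact cleanly with compatibility.
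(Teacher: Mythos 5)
Your proposal is correct and follows essentially the same route as the paper's proof: consistency of each $i\in W$ for reflexivity, the previously recorded symmetry of compatibility, MEM combined with compatibility (via $\neg\neg$r) for the forcing condition, and the contrapositive passing through $\neg\neg p$ with the Weak Lindenbaum theorem for $R_Q$-closedness. The only cosmetic difference is that you argue the forcing condition directly from $j\vdash\Box\alpha$ whereas the paper runs the contrapositive from $l\not\vdash\alpha$; the ingredients are identical.
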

  \begin{proof}
    Reflexivity of $R_Q$: Since any element $i$ in $W$ is consistent, $\forall \alpha \in F (i \vdash \alpha \Rightarrow i \not\vdash \neg\alpha)$.
    This means that $i$ is compatible with $i$.
    Symmetry of $R_Q$: Suppose that $\forall \alpha \in F (i \vdash \alpha \Rightarrow j \not\vdash \neg\alpha)$. Then, we have $\forall \alpha \in F (j \vdash \neg\neg\alpha \Rightarrow i \not\vdash \neg\alpha)$ (1).
    Now, suppose $j \vdash \alpha$. Then, we have $j \vdash \neg\neg\alpha$ by $\neg\neg r$. Therefore, we have $i \not\vdash \neg\alpha$ by (1).

    Forcing on $R_M$:
    We show that $R_M(i,l) \Rightarrow \forall j \in W (R_Q(i,j) \Rightarrow R_M(j,l))$. Suppose $R_M(i,l)$. Then, we have
    $\forall \alpha \in F (i \vdash \Box\alpha \Rightarrow l \vdash \alpha)$,
    hence $\forall \alpha \in F (l \not\vdash \alpha \Rightarrow i \not\vdash \Box\alpha)$ (2).
    Now, suppose $l \not\vdash \alpha$.
    Then, we have $i \not\vdash \Box\alpha$ by (2),
    implying $i \vdash \neg\Box\alpha$ by MEM.
    Furthermore, suppose $R_Q(i,j)$.
    Then, we have $j \not\vdash \neg\neg\Box\alpha$,
    implying $j \not\vdash \Box\alpha$ by $\neg\neg$r.
    Therefore, we conclude that $\forall \alpha \in F (l \not\vdash \alpha \Rightarrow j \not\vdash \Box\alpha)$, which means $R_M(j,l)$.

    $R_Q$-closedness:
    We show only the non-trivial implication here: $\forall j \in W(R_Q(i,j)  \Rightarrow \exists k \in W$ s.t. $(R_Q(j,k)$ and $k \in \rho(p))) \Rightarrow i \in \rho(p)$.
    Suppose $i \not\in \rho(p)$. Then, we have $p \not\in i$.
    Since $i$ is deductively closed, we have $i \not\vdash p$, implying $i \not\vdash \neg\neg p$ by $\neg\neg l$ and CUT.
    This, along with the weak Lindenbaum theorem (Theorem \ref{th:WLT}), leads to $\exists j \in W$ s.t. $R_Q(i,j)$ and $j \vdash \neg p$.
    Furthermore, suppose $R_Q(j,k)$.
    Then, we have $k \not\vdash \neg\neg p$,
    implying $k \not\vdash p$ by $\neg\neg$r.
    That is, $p \not\in k$, which means $k \not\in \rho(p)$.
    Therefore, we conclude that
    $\exists j \in W$ s.t. $R_Q(i,j)$ and $\forall k \in W (R_Q(j,k) \Rightarrow k \not\in \rho(p))$.\qed
  \end{proof}
  The following lemma is essential to the proof of the completeness theorem.
  \begin{lemma}\label{lem:comp}
    Let $\mathcal{S_C}=\langle W,R_Q,R_M,\rho \rangle$ be the canonical model for {\bf QML},
    $i \in W$, and $\alpha \in F$.
    Then, $i \models_\mathcal{S_C} \alpha$ iff $\alpha \in i$.
  \end{lemma}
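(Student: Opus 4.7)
The plan is to proceed by induction on the construction of $\alpha$. The atomic case is immediate from the definition of $\rho$, and the conjunction case will follow routinely from deductive closure of $i$ together with the rules $\wedge\mathrm{l}_1$, $\wedge\mathrm{l}_2$, $\wedge\mathrm{r}$, AX, and CUT: $\alpha \wedge \beta \in i$ iff $\alpha \in i$ and $\beta \in i$, and the semantic clause for $\wedge$ matches. The substantive work will lie in the clauses for $\neg$ and $\Box$.

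For the negation case, after unfolding the semantic clause and applying the induction hypothesis, the goal reduces to showing that $\forall j \in W\,(R_Q(i,j) \Rightarrow \alpha \notin j)$ is equivalent to $\neg\alpha \in i$. From right to left I would use only the compatibility definition of $R_Q$: if $i \vdash \neg\alpha$ and $\alpha \in j$, then $j \vdash \neg\neg\alpha$ by $\neg\neg\mathrm{r}$, contradicting $R_Q(i,j)$. From left to right is where Theorem \ref{th:WLT} enters: assuming $\neg\alpha \notin i$, i.e., $i \not\vdash \neg\alpha$, that theorem supplies a consistent deductively closed set $j$ compatible with $i$ (so $j \in W$ and $R_Q(i,j)$) with $\alpha \in j$, furnishing the required counterexample.

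The box case is where I expect the main obstacle. The forward direction ($\Box\alpha \in i$ implies $i \models_{\mathcal{S_C}} \Box\alpha$) is immediate from the definition of $R_M$ and the induction hypothesis. For the converse, assuming $\Box\alpha \notin i$, I plan to construct a witness $l \in W$ with $R_M(i,l)$ and $\alpha \notin l$ by setting
\[
\Gamma := \{\beta \in F \mid i \vdash \Box\beta\}
\]
and taking $l := \overline{\Gamma}$. By construction $l$ is deductively closed, and whenever $i \vdash \Box\beta$ one has $\beta \in \Gamma \subseteq l$, so $l \vdash \beta$; hence $R_M(i,l)$ holds.

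The crux will be the subclaim that $\Gamma \not\vdash \alpha$. Supposing otherwise, since derivations are finite some finite $\{\beta_1,\dots,\beta_n\} \subseteq \Gamma$ satisfies $\beta_1,\dots,\beta_n \vdash \alpha$; then rule $\mathbf{K}$ yields $\Box\beta_1,\dots,\Box\beta_n \vdash \Box\alpha$, and repeated applications of CUT against $i \vdash \Box\beta_k$ would give $i \vdash \Box\alpha$, contradicting $\Box\alpha \notin i$. Once $\Gamma \not\vdash \alpha$ is established, it simultaneously delivers $\alpha \notin l$ and the consistency of $l$ (an inconsistent $l$ would derive $\alpha$), placing $l$ in $W$. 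The induction hypothesis then gives $l \not\models_{\mathcal{S_C}} \alpha$, and this $l$ witnesses $i \not\models_{\mathcal{S_C}} \Box\alpha$, closing the argument.
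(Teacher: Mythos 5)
Your proposal is correct and follows essentially the same route as the paper's proof: induction on $\alpha$, the weak Lindenbaum theorem for the negation case, and for the box case the witness $\overline{\{\beta \mid i \vdash \Box\beta\}}$ with non-derivability of $\alpha$ established via finiteness of derivations, rule $\mathbf{K}$, and CUT. Your explicit remark that $\Gamma \not\vdash \alpha$ also yields consistency of $l$ (hence $l \in W$) makes a point the paper leaves implicit.
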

  \begin{proof}
    We show both the `if' and the `only if' parts by induction on the construction of $\alpha$.
    \begin{enumerate}
      \renewcommand{\labelenumi}{\alph{enumi})}
      \item Case: Atomic formulas.
      Immediate by the definition of $\rho$.
      \item Case: $\alpha \wedge \beta$.
      First, we show the `if' part:
      $\alpha \wedge \beta\in i \Rightarrow i \models_\mathcal{S_C} \alpha \wedge  \beta$.
      Suppose $\alpha \wedge \beta\in i$.
      Then, we have
      $i \vdash \alpha \wedge \beta$.
      This leads to $i \vdash \alpha$ by $\wedge {\rm l}_1$ and CUT.
      Similary, $i \vdash \beta$.
      Since $i$ is deductively closed, we have
      $\alpha \in i$ and $\beta \in i$.
      By the induction hypothesis, we have $i \models_\mathcal{S_C} \alpha$ and $i \models_\mathcal{S_C} \beta$.
      Finally, we have $i \models_\mathcal{S_C} \alpha \wedge  \beta$
      by the definition of truth.
      Next, we show the `only if' part:
      $\alpha \wedge \beta \not\in i \Rightarrow i \not\models_\mathcal{S_C} \alpha \wedge \beta$.
      Suppose $\alpha \wedge \beta \not\in i$.
      Since $i$ is deductively closed, we have $i \not\vdash \alpha \wedge \beta$.
      That is, either
      $i \not\vdash \alpha$ or $i \not\vdash \beta$ by $\wedge$r.
      Suppose $i \not\vdash \alpha$.
      Then, we have $\alpha \not\in i$.
      By the induction hypothesis, we have $i \not\models_\mathcal{S_C} \alpha$.
      Finally, we have
      $i \not\models_\mathcal{S_C} \alpha \wedge \beta$
      by the definition of truth.
      The same reasoning applies to the case $i \not\vdash \beta$.
      \item Case: $\neg \alpha$.
      First, we show the `if' part:
      $\neg\alpha \in i \Rightarrow i \models_\mathcal{S_C} \neg\alpha$.
      Suppose
      $\neg \alpha \in i$.
      Then, we have $i \vdash \neg \alpha$.
      Furthermore, suppose $R_Q(i,j)$.
      Then, we have $j \not\vdash \neg\neg\alpha$,
      which leads to $j \not\vdash \alpha$ by $\neg\neg$r, and thus $\alpha \not\in j$.
      By the induction hypothesis, we have $j \not\models_\mathcal{S_C} \alpha$.
      Therefore, $\forall j \in W (R_Q(i,j) \Rightarrow j \not\models_\mathcal{S_C} \alpha)$.
      Finally, we have $i \models_\mathcal{S_C} \neg\alpha$ by the definition of truth.
      Next, we show the `only if' part:
      $\neg\alpha \not\in i \Rightarrow i \not\models_\mathcal{S_C} \neg\alpha$.
      Suppose $\neg\alpha \not\in i$.
      Since $i$ is deductively closed, we have
      $i \not\vdash \neg\alpha$.
      Then, we have
      $\exists j \in W$ s.t. $(R_Q(i,j)$ and $j \vdash \alpha)$
      by the weak Lindenbaum theorem (Theorem \ref{th:WLT}).
      Let us refer to $j$, whose existence is claimed here, as $j_i$.
      Since $j_i$ is deductively closed, we have $\alpha \in j_i$.
      By the induction hypothesis, we have
      $j_i \models_\mathcal{S_C} \alpha$.
      Therefore,
      $\exists j \in W$ s.t. $(R_Q(i,j)$ and $j \models_\mathcal{S_C} \alpha)$.
      Finally, we have
      $i \not\models_\mathcal{S_C} \neg\alpha$ by the definition of truth.
      \item Case: $\Box \alpha$.
      First, we show the `if' part:
      $\Box\alpha \in i \Rightarrow i \models_\mathcal{S_C} \Box\alpha$.
      Suppose $\Box\alpha \in i$.
      Then, we have $i \vdash \Box\alpha$.
      Furthermore, suppose $R_M(i,l)$.
      Then, we have $l \vdash \alpha$.
      Since $l$ is deductively closed, we have
      $\alpha \in l$.
      By the induction hypothesis, we have
      $l \models_\mathcal{S_C} \alpha$.
      Therefore,
      $\forall l \in W (R_M(i,l) \Rightarrow l \models_\mathcal{S_C} \alpha)$.
      Finally, we have $i \models_\mathcal{S_C} \Box\alpha$ by the definition of truth.
      Next, we show the `only if' part:
      $\Box\alpha \not\in i \Rightarrow i \not\models_\mathcal{S_C} \Box\alpha$.
      Suppose
      $\Box\alpha \not\in i$.
      Since $i$ is deductively closed, we have $i \not\vdash \Box\alpha$.
      Let $l_i := \{\gamma \mid \Box\gamma \in i\}$
      and suppose
      $l_i \vdash \alpha$.
      Then, we have $\gamma_1, \dots, \gamma_n \vdash \alpha$ for some $\gamma_1, \dots, \gamma_n \in l_i$.
      Now, applying $\mathbf{K}$ to this, we have $\Box\gamma_1, \dots, \Box\gamma_n \vdash \Box\alpha$. Consequently, we obtain $i \vdash \Box\alpha$, leading to a contradiction.
      Hence, we have $l_i \not\vdash \alpha$.
      Then, we have $R_M(i,\overline{l}_i)$ and $\overline{l}_i \not\vdash \alpha$.
      The latter means
      $\alpha \not\in \overline{l}_i$,
      which leads to $\overline{l}_i \not\models_\mathcal{S_C} \alpha$
      by the induction hypothesis.
      Therefore,
      $\exists l \in W$ s.t. $(R_M(i,l)$ and $l \not\models_\mathcal{S_C} \alpha)$.
      Finally, we have $i \not\models_\mathcal{S_C} \Box\alpha$
      by the definition of truth.
    \end{enumerate}
  \end{proof}
  \begin{theorem}[Completeness]
    \label{th:com}
    $\Gamma \models \alpha \Rightarrow \Gamma \vdash \alpha$.
  \end{theorem}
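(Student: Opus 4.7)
The plan is to prove the contrapositive: assuming $\Gamma \not\vdash \alpha$, I will exhibit a quantum modal structure and a world in it that verifies every formula in $\Gamma$ but falsifies $\alpha$, thereby establishing $\Gamma \not\models \alpha$. The canonical model $\mathcal{S_C}$ and Lemma \ref{lem:comp} have been designed precisely to make this direct, so the heavy lifting has already been done.

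First, I would observe that $\Gamma$ must be consistent. Indeed, if $\Gamma$ were inconsistent, then (as noted in the preliminaries to the axiomatization) $\Gamma \vdash \delta$ would hold for every $\delta \in F$, contradicting $\Gamma \not\vdash \alpha$. Hence the deductive closure $\overline{\Gamma}$ is a consistent, deductively closed set of formulas, so $\overline{\Gamma} \in W$ for the canonical model. Setting $i := \overline{\Gamma}$, we have $\Gamma \subseteq \overline{\Gamma} = i$, and in particular $\gamma \in i$ for every $\gamma \in \Gamma$. Applying the `if' direction of Lemma \ref{lem:comp} to each such $\gamma$, I obtain $i \models_{\mathcal{S_C}} \gamma$ for every $\gamma \in \Gamma$, i.e., $i \models_{\mathcal{S_C}} \Gamma$.

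Next, from the hypothesis $\Gamma \not\vdash \alpha$ I conclude $\alpha \notin \overline{\Gamma} = i$, since the deductive closure collects exactly the derivable consequences. Applying the `only if' direction of Lemma \ref{lem:comp} (i.e., its contrapositive) at the world $i$ with the formula $\alpha$ yields $i \not\models_{\mathcal{S_C}} \alpha$. Thus $i$ is a world of $\mathcal{S_C}$ witnessing $i \models_{\mathcal{S_C}} \Gamma$ and $i \not\models_{\mathcal{S_C}} \alpha$, so by the definition of $\Gamma \models_{\mathcal{S_C}} \alpha$ we have $\Gamma \not\models_{\mathcal{S_C}} \alpha$, and therefore $\Gamma \not\models \alpha$.

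There is essentially no obstacle left at this stage: all the difficulty has been absorbed into the preceding lemmas, where MEM, the rule \textbf{K}, the forcing condition, and the weak Lindenbaum theorem were used to handle the $\Box$-case and the $\neg$-case of Lemma \ref{lem:comp}. The only subtlety worth explicitly flagging in the write-up is that the canonical model we apply must actually be a quantum modal structure, which is exactly what the preceding lemma guarantees; one should cite that lemma before invoking $\mathcal{S_C}$, so that the final line ``$\Gamma \not\models_{\mathcal{S_C}} \alpha$ hence $\Gamma \not\models \alpha$'' is legitimate.
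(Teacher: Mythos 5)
Your proposal is correct and follows essentially the same route as the paper's own proof: take $i:=\overline{\Gamma}$ in the canonical model, use Lemma \ref{lem:comp} to get $i \models_{\mathcal{S_C}} \Gamma$ and $i \not\models_{\mathcal{S_C}} \alpha$, and conclude $\Gamma \not\models \alpha$. The only cosmetic difference is that the paper phrases the second step as a small reductio while you apply the contrapositive of the lemma directly; these are equivalent.
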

  \begin{proof}
    Suppose $\Gamma \not\vdash \alpha$.
    This implies $\Gamma$ is consistent.
    Let $i := \overline{\Gamma}$.
    Then, we have $i \in W$ and $\forall \gamma \in \Gamma (\gamma \in i)$.
    Hence, $i \models_\mathcal{S_C} \Gamma$ by  Lemma \ref{lem:comp}.
    Now, suppose
    $i \models_\mathcal{S_C} \alpha$.
    Then, we have $\alpha \in i$  by
    Lemma \ref{lem:comp}.
    This leads to  $\Gamma \vdash \alpha$ by the definition of $i$,
    which is a contradiction.
    Hence, we have $i \not\models_\mathcal{S_C} \alpha$.
    Therefore,
    $\Gamma \not\models_\mathcal{S_C} \alpha$.
    \qed
  \end{proof}
  \section{Applications}
  \label{sec:app}
  Depending on how we define the relation $R_M$ and interpret the modal operators, we can obtain various systems of quantum modal logic.
  In this section, we will briefly explore several examples. Note that in these examples, the modal operators allowed by our semantics are limited to those that satisfy our forcing condition on $R_M$: the worlds connected by $R_Q$ can see the same set of worlds through $R_M$.
  \paragraph{Alethic logic}
  Dalla Chiara \cite{Chia1977}, as mentioned in Section \ref{sub:bac}, was considering a physical interpretation of alethic modalities, such as necessity and possibility.
  In {\bf QML}, alethic modality over quantum logic can be realized by setting $R_M = R_Q$.
  Note that in this case, according to Remark \ref{re:dia}, the usual semantics for $\Diamond$  holds, i.e., $i \models_\mathcal{S} \Diamond\alpha $ iff $\exists j \in W$ s.t. $(R_M(i,j)$ and  $j \models_\mathcal{S} \alpha)$.
  Alethic modalities can be physically interpreted as follows:
  \begin{itemize}
    \item $i \models_\mathcal{S} \Box\alpha$: $\alpha$ is a proposition that holds with certainty in any state.
    \item $i \models_\mathcal{S} \Diamond\alpha$: $\alpha$ has a probability different from $0$ in the state $i$.
    \item $i \models_\mathcal{S} \Box\Diamond\alpha$: $\alpha$ has probability $1$ in the  state $i$.
  \end{itemize}
  $R_Q$ is originally reflexive and symmetric, but it also acquires transitivity due to its own forcing condition. Consequently, $R_Q$ becomes an equivalence relation, and the following propositions stated in Section \ref{sub:bac} also hold.
  \begin{itemize}
    \item $i \models_\mathcal{S} \Box\Box\rho(\alpha)$ iff $i \models_\mathcal{S} \Box\rho(\alpha)$
    \item $i \models_\mathcal{S} \Diamond\Box\rho(\alpha)$ iff $i \models_\mathcal{S} \Box\rho(\alpha)$
  \end{itemize}
  \paragraph{Temporal logic}
  We can introduce a structure of time by defining $R_M$ as a strict total order, that is, a non-reflexive total order. Then, a modal formula $\Box\alpha$ means that $\alpha$ always holds in the future. At this point, the forcing condition imposed on $R_M$ implies that the set of possible future situations is invariant among similar situations.
  As in ordinary temporal logic, there are potential variations of quantum temporal logic, including discrete-time or continuous-time, linear-time or branching-time, and more. We may implement these variations by appropriately defining $W$ and $R_M$.
  \paragraph{Dynamic logic}
  To replicate the concept of dynamic quantum logic proposed by Baltag and Smets \cite{Balt2012}, we need to consider a multi-modal logic where each {\it action} \(P\) is associated with a modal operator \([P]\) and an accessibility relation \(R_M^{P}\).
  There is the following link between $R_Q$ and $R_M^P$:
  $R_Q(i,j)$ iff $ \exists{P}$ s.t.  $R_M^{P}(i,j)$.
  Actions in this framework include tests, i.e., measurements, and unitary evolution. Unfortunately, these actions do not generally satisfy our forcing condition. Therefore, our next goal should be to relax this constraint while still obtaining a concise axiomatic system.
  \section{Concluding remark}
  \label{sec:con}
  In this paper, we have presented the semantics and the axiomatic system for quantum modal logic, and have rigorously established the soundness and the completeness theorems. Although the obtained system is quite simple, it requires a relatively strong condition on $R_M$. Referring to the recent results \cite{Holl2022,Holl2024a,Holl2024b}, we will explore the possibility of a more applicable and tractable axiomatic framework in the forthcoming research.
  \section*{Acknowledgements}
  We appreciate the editors and the anonymous reviewers for their thorough examination and numerous insightful comments, which helped us correct errors and significantly improve this paper.
  \section*{Funding}
  This work was supported by Japan Society for the Promotion of Science (JSPS) KAKENHI [Grant Number JP24K03372].
  

\begin{thebibliography}{99}
    \bibitem{Balt2010}
    Baltag, A and S. Smets. `Correlated knowledge: an epistemic-logic view on quantum entanglement.' \emph{International Journal of Theoretical Physics} 49: 3005-3021, 2010.
    \bibitem{Balt2012}
    Baltag, A. and S. Smets. `The dynamic turn in quantum logic.' \emph{Synthese} 186: 753-773,  2012.
    \bibitem{Birk1936}
    Birkhoff,~G. and J.~von Neumann, `The Logic of Quantum Mechanics.', \emph{Annals of Mathematics} 37(4):823--843, 1936.
    \bibitem{Chia1977}
    Dalla Chiara, M. L. `Quantum logic and physical modalities.'
    \emph{Journal of Philosophical Logic} 6(1): 391-404, 1977.
    \bibitem{Chia2002}
    Dalla Chiara, M. L., and R. Giuntini. `Quantum logics.' \emph{Handbook of philosophical logic} 129-228, 2002.
    \bibitem{Dish1972}
    Dishkant, H. `Semantics of the minimal logic of quantum mechanics.', \emph{Studia Logica: An International Journal for Symbolic Logic} 30:23-32, 1972.
    \bibitem{Dish1977}
    Dishkant, H. `Imbedding of the quantum logic in the modal system of Brower.' \emph{The Journal of Symbolic Logic} 42(3): 321-328, 1977.
    \bibitem{Dish1978}
    Dishkant, H. `An extension of the \L{}ukasiewicz logic to the modal logic of quantum mechanics.' \emph{Studia Logica} 37: 149-155,  1978.
    \bibitem{Gabb1998}
    Gabbay, Dov M. `\emph{Fibring logics}.' Vol. 38. Clarendon Press, 1998.
    \bibitem{Gold1974}
    Goldblatt, R. I. `Semantic analysis of orthologic.' \emph{Journal of Philosophical logic} 3(1/2):19-35, 1974.
    \bibitem{Holl2022}
    Holliday, W. H. `Compatibility and accessibility: lattice representations for semantics of non-classical and modal logics.' \emph{arXiv preprint} arXiv:2201.07098, 2022.
    \bibitem{Holl2024a}
    Holliday, W. H., and M. Mandelkern. `The orthologic of epistemic modals.' \emph{Journal of Philosophical Logic} 53: 831-907, 2024.
    \bibitem{Holl2024b}
    Holliday, W. H. `Modal logic, fundamentally.' \emph{arXiv preprint} arXiv:2403.14043, 2024.
    \bibitem{Nish1980}
    Nishimura, H. `Sequential method in quantum logic.'
    \emph{The Journal of Symbolic Logic} 45.2: 339-352,  1980.
    \bibitem{Mack1963}
    Mackey, G. W. `The mathematical foundations of quantum mechanics,' eds Benjamin WA Inc., New York, 1963.
    \bibitem{Mitt1979}
    Mittelstaedt, P. `The modal logic of quantum logic.' \emph{Journal of Philosophical Logic} 8(1): 479-504, 1979.
    \bibitem{Pess2005}
    Pessoa Jr, O. `Towards a modal logical treatment of quantum physics.' \emph{Logic Journal of IGPL} 13(1): 139-147,  2005.
    \bibitem{Fraa1981}
    van Fraassen, B. C. `A modal interpretation of quantum mechanics.' \emph{Current issues in quantum logica}:229-258, Boston, MA: Springer US, 1981.
    \bibitem{Yu2019}
    Yu, N. `Quantum temporal logic.' arXiv preprint arXiv:1908.00158, 2019.
  \end{thebibliography}
\end{document}